\documentclass[conference]{IEEEtran}
\IEEEoverridecommandlockouts
\usepackage{cite}
\usepackage{amsmath,amssymb,amsfonts,amsthm}
\usepackage{algorithmic}
\usepackage{graphicx}
\usepackage{textcomp}
\usepackage{xcolor}

\newtheorem{thm}{Theorem}
\newtheorem{lem}{Lemma}

\newtheorem{cor}{Corollary}
\newtheorem{remark}{Remark}
\theoremstyle{definition}
\newtheorem{definition}{Definition}

\def\BibTeX{{\rm B\kern-.05em{\sc i\kern-.025em b}\kern-.08em
    T\kern-.1667em\lower.7ex\hbox{E}\kern-.125emX}}
\begin{document}

\title{Sub-Gaussian Error Bounds for Hypothesis Testing\\
\thanks{This research was supported in part by the US National Science Foundation under grant HDR: TRIPODS 19-34884.}
}

\author{\IEEEauthorblockN{Yan Wang}
\IEEEauthorblockA{%\textit{Department of Statistics} \\
Department of Statistics, Iowa State University\\
Ames, IA 50011, USA \\
Email: wangyan@iastate.edu}
%\and
%\IEEEauthorblockN{Dan Nettleton}
%\IEEEauthorblockA{\textit{Department of Statistics} %\\
%\textit{Iowa State University}\\
%Ames, USA \\
%dnett@iastate.edu}
}

\maketitle

\begin{abstract}
We interpret likelihood-based test functions from a geometric perspective where the Kullback-Leibler (KL) divergence is adopted to quantify the distance from a distribution to another. Such a test function can be seen as a sub-Gaussian random variable, and we propose a principled way to calculate its corresponding sub-Gaussian norm. Then an error bound for binary hypothesis testing can be obtained in terms of the sub-Gaussian norm and the KL divergence, which is more informative than Pinsker's bound when the significance level is prescribed. For $M$-ary hypothesis testing, we also derive an error bound which is complementary to Fano's inequality by being more informative when the number of hypotheses or the sample size is not large.
\end{abstract}

%\begin{IEEEkeywords}
%Sub-Gaussian random variables, statistical error, %hypothesis testing, Kullback-Leibler divergence.
%\end{IEEEkeywords}

\section{Introduction}
Hypothesis testing is one central task in statistics. One of its simplest forms is the binary case: given $n$ independent and identically distributed (i.i.d.) random variables $X_1^n\equiv(X_1,\ldots,X_n)$, one wants to infer whether the null hypothesis $H_0:X_i\sim P_0$ or the alternative hypothesis $H_1:X_i\sim P_1$ is true. The binary case serves as an important starting point from which further results can be established, in the settings of both classical and quantum hypothesis testing \cite{Book:Keener,Book:Hayashi}. With $X_1^n$, one can construct the empirical distribution $\hat P_n=\frac{1}{n}\sum_{i=1}^n\delta_{X_i}$, where $\delta_X$ is the Dirac measure that puts unit mass at $X$. Adopting the Kullback-Leibler (KL) divergence as a distance from $\hat P_n$ to $P_0$ or $P_1$, one can construct a test function as
\begin{align}\label{eq:Phi}
\Phi(X_1^n) = I\{D_\mathrm{KL}(\hat P_n\rVert P_0) - D_\mathrm{KL}(\hat P_n\rVert P_1) > c\},
\end{align}
where $I\{\cdot\}$ is the indicator function, $c\geq0$ serves as a threshold beyond which the decision that $\hat P_n$ is closer to $P_1$ than to $P_0$ is made, and $D_\mathrm{KL}(P\rVert Q)=\int\ln(dP/dQ)dP$ is the KL divergence from probability $P$ to probability $Q$ if $P\ll Q$. Conventionally, if $P$ is not absolutely continuous with respect to $Q$, then $D_\mathrm{KL}(P\rVert Q)\equiv\infty$. Note $\hat P_n$ is discrete; hence if both $P_0$ and $P_1$ are discrete with the same support, \eqref{eq:Phi} is well defined. Denote the densities of $P_0$ and $P_1$ with respect to the counting measure as $p_0$ and $p_1$, respectively, and we have
\begin{align}\label{eq:KL_diff}
D_\mathrm{KL}(\hat P_n\rVert P_0) - D_\mathrm{KL}(\hat P_n\rVert P_1) = \frac{1}{n}\ln\left(\frac{\prod_{i=1}^n p_1(X_i)}{\prod_{i=1}^n p_0(X_i)}\right).
\end{align}
In fact, in this case, \eqref{eq:Phi} is equivalent to the test function for the likelihood ratio test \cite{Book:Cover}
\begin{align}\label{eq:Phi_lrt}
\Phi_\mathrm{lrt}(X_1^n) = I\left\{ \frac{\prod_{i=1}^n p_1(X_i)}{\prod_{i=1}^n p_0(X_i)} > c'\right\},
\end{align}
where $c'=e^{cn}$. In the case that both $P_0$ and $P_1$ are continuous, the KL divergence difference $D_\mathrm{KL}(\hat P_n\rVert P_0)-D_\mathrm{KL}(\hat P_n\rVert P_1)$ is not well defined. Nonetheless, the technically tricky part is the term ``$\int \hat p_n\ln(\hat p_n) d\mu$,'' where we use $\hat p_n$ to denote the density of $\hat P_n$ with respect to the Lebesgue measure $\mu$ as if it had one. But it appears twice and is cancelled out formally. We might conveniently \emph{define} the KL divergence difference in this case as \eqref{eq:KL_diff}, and still find the equivalence between \eqref{eq:Phi} and \eqref{eq:Phi_lrt}. Using the KL divergence in the context of hypothesis testing can be beneficial. Firstly, it provides a clear geometric meaning to the likelihood ratio test, as well as to the general idea underlying hypothesis testing. Secondly, it also offers a geometric, or even physical, interpretation of the lower bound for the resulting statistical errors, as shown below.

Under the null hypothesis $H_0$, the type I error rate (or the significance level) $\alpha$ that is incurred by applying \eqref{eq:Phi} for a fixed $c$ is
\begin{align}\label{eq:alpha}
\alpha = \mathbb{E}_{X_1^n\sim P_0^{\otimes n}} \Phi(X_1^n),
\end{align}
where $P_0^{\otimes n}$ is the product probability measure for $X_1^n$ under $H_0$. In practice, by prescribing the significance level, for example, letting $\alpha=0.05$, one can derive the corresponding $c$ and determine the desired test function. However, in this work, our focus is $not$ to find a test function at given $\alpha$, we mainly deal with the case that $c$ is fixed, and $\alpha$ is obtained in a somewhat passive way. Thanks to the Neyman-Pearson lemma \cite{paper:NP-Lemma}, the likelihood ratio test is known to be optimal in the sense of statistical power. Hence, given the incurred $\alpha$, test function \eqref{eq:Phi} has the minimal type II error rate $\beta$ among all possible test functions with the corresponding type I error rate no gretaer than $\alpha$:
\begin{align}\label{eq:beta}
\beta = 1 - \mathbb{E}_{X_1^n\sim P_1^{\otimes n}} \Phi(X_1^n),
\end{align}
where $P_1^{\otimes n}$ is the product probability measure for $X_1^n$ under the alternative hypothesis $H_1$.

Controlling statistical errors is of practical importance; however, typically one cannot suppress both types of error simultaneously. Under our i.i.d. setting, a classical result, based on Pinsker's inequality, concerning the error bound for any (measurable) test function is that \cite{Book:Cover}
\begin{align}\label{eq:Pinsker_result}
\alpha + \beta \geq 1-\sqrt{\frac{n}{2}D_\mathrm{KL}(P_1\rVert P_0)}.
\end{align}
This result is striking in that without going into the details of calculating $\alpha$ and $\beta$, one can have a \emph{nontrivial} lower bound of their sum in terms of the KL divergence between two candidate probabilities, as long as the right-hand side of \eqref{eq:Pinsker_result} is greater than 0. For a fixed $n$, this bound is solely determined by $D_\mathrm{KL}(P_1\rVert P_0)$, which reflects the ``distance'' from $P_1$ to $P_0$. This result also has a significant physical meaning. At a nonequilibrium steady state, if $P_1$ denotes the probability associated with observing a stochastic trajectory in the forward process, and $P_0$ in the backward process, then the theory of stochastic thermodynamics tells us that $D_\mathrm{KL}(P_1\rVert P_0)$ is equivalent to the average entropy production $\Delta S$ in the forward process, which is always nonnegative \cite{PRL:Seifert2005,me}. Hence, if one wishes to infer the arrow of time based on observations, then Pinsker's result \eqref{eq:Pinsker_result} implies that the chance of making an error is high if $\Delta S$ is small. In fact, we know that $\Delta S=0$ at equilibrium, and one cannot tell the arrow of time at all; hypothesis testing is just random guess in this case. 

While \eqref{eq:Pinsker_result} is useful, can we have a tighter and thus more informative bound? In this work, we will show that by taking advantage of the sub-Gaussian property of $\Phi(X_1^n)$ \cite{Book:Martin,Book:Roman}, one can derive a bound \eqref{eq:error_sum} on statistical errors in terms of its sub-Gaussian norm (as well as the KL divergence from $P_1$ to $P_0$). We name such an error bound as ``sub-Gaussian'' to highlight this fact. It turns out that it is a tighter bound than \eqref{eq:Pinsker_result} in the sense that it provides a greater lower bound for $\alpha+\beta$ (or for $\beta$ at any given $\alpha\neq 0.5$). In practice, a small $\alpha$ is commonly set as the significance level, and our result can hopefully be more relevant. Moreover, in the case of $M$-ary hypothesis testing where $M>2$ hypotheses are present, we also derive a bound \eqref{eq:beyond_fano} for making incorrect decisions, which is complementary to the celebrated Fano's inequality \cite{Book:Fano} when the number of hypotheses $M$ or the sample size $n$ is not large. The error bounds presented in this work are universal and easily applicable. We hope these findings can help better quantify errors in various statistical practices involving hypothesis testing.

\section{Main Results}
We will first introduce the sub-Gaussian norm of $\Phi(X_1^n)$. Then error bounds in the binary and $M$-ary cases are established, respectively.

\subsection{Sub-Gaussian norm of $\Phi(X_1^n)$}

Sub-Gaussian random variables are natural generalizations of Gaussian ones. The so-called sub-Gaussian property can be defined in several different but equivalent ways \cite{Book:Martin,Book:Roman}. In this work, we pick one that suits most for our purposes.

\begin{definition}
A random variable $X$ with probability law $P$ is called sub-Gaussian if there exists $\sigma>0$ such that its central moment generating function satisfies
\begin{align}
\mathbb{E}_Pe^{s(X-\mathbb{E}_PX)}\leq e^{\sigma^2s^2/2},\ \forall s \in \mathbb{R}.\notag
\end{align}
\end{definition}

\begin{definition}
The associated sub-Gaussian norm $\sigma_{XP}$ of $X$ with respect to $P$ is defined as
\begin{align}
\sigma_{XP}\equiv\inf\{\sigma>0:\mathbb{E}_Pe^{s(X-\mathbb{E}_PX)}\leq e^{\sigma^2s^2/2},\ \forall s \in \mathbb{R}\}.\notag
\end{align}
\end{definition}

\begin{remark}
$\sigma_{XP}$ is a well defined norm for the centered variable $X-\mathbb{E}_PX$ \cite{me}. It is the same for a location family of random variables that have different means but are otherwise identical. Also, $\sigma_{XP}$ is equal to the $\psi_2$-Orlicz norm of $X-\mathbb{E}_PX$ up to a numerical constant factor.
\end{remark}

\begin{lem}\label{lem:boundedsubG}
A bounded random variable is sub-Gaussian. In particular, if $X\in[a,b]$ almost surely with respect to $P$, then $\sigma_{XP}\leq (b-a)/2$. 
\end{lem}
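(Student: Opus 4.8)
The plan is to recognize this as the classical Hoeffding lemma and prove it in that form. By the translation invariance of the sub-Gaussian norm noted in the Remark, it suffices to work with the centered variable $Y \equiv X - \mathbb{E}_P X$, which has $\mathbb{E}_P Y = 0$ and lies almost surely in an interval $[a',b']$ with $b' - a' = b - a$. If I can show $\mathbb{E}_P e^{sY} \le e^{(b-a)^2 s^2 / 8}$ for every $s \in \mathbb{R}$, then $\sigma = (b-a)/2$ is an admissible choice in Definition 1, and hence the infimum in Definition 2 satisfies $\sigma_{XP} \le (b-a)/2$, which in particular shows $X$ is sub-Gaussian.

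To prove the moment generating function bound, I would study the cumulant generating function $\psi(s) \equiv \ln \mathbb{E}_P e^{sY}$. Since $Y$ is bounded, $\psi$ is finite and smooth on all of $\mathbb{R}$, with differentiation under the integral sign justified by boundedness. A direct computation gives $\psi(0) = 0$ and $\psi'(0) = \mathbb{E}_P Y = 0$, while $\psi''(s)$ equals the variance of $Y$ under the exponentially tilted law $dP_s \equiv e^{sY}\,dP / \mathbb{E}_P e^{sY}$. Under $P_s$ the variable $Y$ still takes values in $[a',b']$ almost surely, so I would invoke the elementary fact that a random variable supported in an interval of length $\ell$ has variance at most $\ell^2/4$ (with equality for the symmetric two-point distribution on the endpoints) to conclude $\psi''(s) \le (b-a)^2/4$ for all $s$. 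Then Taylor's theorem with remainder yields, for some $\xi$ between $0$ and $s$, $\psi(s) = \tfrac{1}{2} s^2 \psi''(\xi) \le (b-a)^2 s^2 / 8$, and exponentiating gives the claim.

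The main obstacle, though a mild one, is the uniform second-derivative estimate $\psi''(s) \le (b-a)^2/4$; the rest is bookkeeping about smoothness and Taylor expansion. If one prefers to avoid the exponential-tilting interpretation (and the attendant justification of differentiating under the integral), an alternative is to bound $e^{sY}$ pointwise by convexity of $t \mapsto e^{st}$ on $[a',b']$, namely $e^{sY} \le \frac{b' - Y}{b' - a'} e^{s a'} + \frac{Y - a'}{b' - a'} e^{s b'}$, take $P$-expectations using $\mathbb{E}_P Y = 0$, and then verify by an elementary one-variable argument on the logarithm of the resulting expression that it is at most $e^{(b-a)^2 s^2/8}$; I would keep this as a fallback.
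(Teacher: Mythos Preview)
Your argument is correct: this is exactly Hoeffding's lemma, and both the exponential-tilting proof you outline and the convexity fallback are standard and valid. The paper itself does not give an independent proof at all; it simply cites the result as well known from the textbooks of Wainwright and Vershynin. So your proposal is strictly more detailed than what appears in the paper, supplying a self-contained argument where the paper defers to the literature.
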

\begin{proof}
This is a well known result that can be found in, for example, \cite{Book:Martin,Book:Roman}.
\end{proof}

Test function \eqref{eq:Phi} is an indicator function and takes on values in $\{0,1\}$; hence it is bounded. No matter what the law of $X_1^n$ is, $\Phi(X_1^n)$ is always sub-Gaussian by Lemma \ref{lem:boundedsubG} with a uniform upper bound of its sub-Gaussian norm that
\begin{align}\label{eq:Phi_norm_0.5}
\sigma_{\Phi P}\leq 0.5.
\end{align}
However, if $\alpha$ is fixed as a result of some $c$ being used in \eqref{eq:Phi}, then a more informative sub-Gaussian norm for $\Phi(X_1^n)$ can be obtained under the situation that $X_1^n\sim P_0^{\otimes n}$. In this case, by \eqref{eq:alpha},
\begin{align}
\text{Pr}(\Phi(X_1^n)=1|H_0) = \mathbb{E}_{X_1^n\sim P_0^{\otimes n}} \Phi(X_1^n) = \alpha,\notag
\end{align}
and one can explicitly write
\begin{align}
&\mathbb{E}_{X_1^n\sim P_0^{\otimes n}} e^{s[\Phi(X_1^n) - \alpha]}\notag\\
=\ & \text{Pr}(\Phi=1|H_0) e^{s(1-\alpha)} + \text{Pr}(\Phi=0|H_0) e^{s(0-\alpha)} \notag\\
=\ &\alpha e^{s(1-\alpha)} + (1-\alpha)e^{-s\alpha} \equiv e^{f}.\notag
\end{align}
Using $f$, one can rewrite the sub-Gaussian property as
\begin{align}\label{eq:f}
h \equiv f-\frac{1}{2}\sigma^2s^2\leq 0,\ \forall s\in\mathbb{R}.
\end{align}
Since $\Phi$ is sub-Gaussian, there exists $\sigma$ such that at any $\alpha$, we have $h(s=0)=0$, which is the maximal value of $h$. This fact implies $\partial h/\partial s|_{s=0}=0$ and $\partial^2 h/\partial s^2|_{s=0}\leq0$. The latter poses a constraint on $\sigma$'s under which \eqref{eq:f} holds:
\begin{align}\label{eq:sigma_range}
\partial^2 h/\partial s^2|_{s=0}\leq0\Longrightarrow \sigma^2 \geq \partial^2 f/\partial s^2|_{s=0} = \alpha(1-\alpha).
\end{align}
Since $\alpha(1-\alpha)\leq 0.25$, we know the minimal universal $\sigma$ for all $\alpha$ is $0.5$, consistent with \eqref{eq:Phi_norm_0.5}.

For a specific $\alpha$, the minimal $\sigma$ that makes \eqref{eq:f} valid is denoted as $\sigma_{\Phi0}(\alpha)$, which is defined to be the sub-Gaussian norm of $\Phi(X_1^n)$ under the law $\Phi_\#P_0^{\otimes n}$, the push forward probability measure of $P_0^{\otimes n}$ induced by $\Phi$. We may also simply state that $\sigma_{\Phi0}(\alpha)$ is the sub-Gaussian norm of $\Phi(X_1^n)$ under $H_0$. The norm $\sigma_{\Phi0}(\alpha)$ can be numerically obtained in a principled way, as summarized in the following theorem.
\begin{thm}\label{thm:Phi_norm}
For $\alpha\neq 0.5$, besides the trivial solution $(\sigma,0)$ with any $\sigma>0$, the equations
\begin{align}\label{eq:norm_condition}
\left\{
  \begin{array}{ll}
    f = \frac{1}{2}\sigma^2s^2, \\
    \frac{\partial f}{\partial s} = \sigma^2s,
  \end{array}
\right.
\end{align}
have only one nontrivial solution $(\sigma^\ast,s^\ast)$ where $s^\ast\neq 0$. The sub-Gaussian norm of $\Phi(X_1^n)$ under $H_0$ is $\sigma_{\Phi0}=\sigma^\ast$. For $\alpha=0.5$, $\sigma_{\Phi0} = 0.5$.
\end{thm}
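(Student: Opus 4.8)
The plan is to reduce the infimum defining $\sigma_{\Phi0}(\alpha)$ to a one–dimensional maximization and then analyze the resulting stationarity equation. Recall $f(s)=\ln\!\big(\alpha e^{(1-\alpha)s}+(1-\alpha)e^{-\alpha s}\big)$. For $s\neq 0$ the inequality $h(s)=f(s)-\tfrac12\sigma^2 s^2\le 0$ is equivalent to $\sigma^2\ge g(s):=2f(s)/s^2$ (the case $s=0$ being automatic), so $\sigma_{\Phi0}(\alpha)^2=\sup_{s\neq 0}g(s)$. By Jensen's inequality $f(s)>0$ for $s\neq 0$ because $\Phi$ is non-degenerate when $0<\alpha<1$; a Taylor expansion (using $f(0)=f'(0)=0$) gives $g(s)\to f''(0)=\alpha(1-\alpha)$ as $s\to 0$, so $g$ extends to a positive, real-analytic function on $\mathbb{R}$, and the linear growth of $f$ at $\pm\infty$ gives $g(s)\to 0$, so this extension attains a global maximum at a finite critical point. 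Moreover, for $\alpha\neq 1/2$ one has $g'(0)=\tfrac13 f'''(0)=\tfrac13\alpha(1-\alpha)(1-2\alpha)\neq 0$, so the maximizer $s^\ast$ is nonzero. Since $g'(s)=2\phi(s)/s^3$ with $\phi(s):=sf'(s)-2f(s)$, the maximizer satisfies $\phi(s^\ast)=0$, which together with $\sigma_{\Phi0}^2=g(s^\ast)$ is exactly the system in the statement; conversely any solution $(\sigma,s)$ with $s\neq 0$ forces $\phi(s)=0$ and $\sigma^2=2f(s)/s^2$. Thus the theorem reduces to showing $\phi$ has exactly one nonzero zero and that it is the global maximizer of $g$.

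The key device for the zero count is the substitution $p(s)=\alpha e^{s}/(\alpha e^{s}+1-\alpha)$, the mean of the exponential tilt of the $\mathrm{Bernoulli}(\alpha)$ law of $\Phi$ under $H_0$, which yields the identities $f'(s)=p-\alpha$ and $f''(s)=p(1-p)$, hence
\begin{align}
\phi'(s)=s\,p(1-p)-(p-\alpha),\qquad \phi''(s)=s\,p(1-p)(1-2p).\notag
\end{align}
Since $p$ is strictly increasing with $p(0)=\alpha$, the sign of $\phi''$ is governed by the signs of $s$ and of $1-2p$. Assuming $\alpha<1/2$ (the case $\alpha>1/2$ follows from the symmetry $f_{1-\alpha}(-s)=f_\alpha(s)$, hence $g_{1-\alpha}(-s)=g_\alpha(s)$), let $s_0>0$ be the unique point with $p(s_0)=1/2$; then $\phi'$ is strictly decreasing on $(-\infty,0)$, strictly increasing on $(0,s_0)$, strictly decreasing on $(s_0,\infty)$, with $\phi'(0)=0$ and $\phi'(s)\to -(1-\alpha)<0$ as $s\to\infty$. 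This forces $\phi'>0$ on $(-\infty,0)$, and $\phi'>0$ on $(0,s_1)$, $\phi'<0$ on $(s_1,\infty)$ for a unique $s_1>s_0$. Since $\phi(0)=0$ and $\phi(s)\to-\infty$ as $s\to\infty$, it follows that $\phi<0$ on $(-\infty,0)$, while $\phi$ increases from $0$ to a positive maximum on $(0,s_1]$ and then strictly decreases to $-\infty$; hence $\phi$ has exactly one nonzero zero $s^\ast$, lying in $(s_1,\infty)$ and of the same sign as $1-2\alpha$. Reading off the sign of $g'=2\phi/s^3$ then shows $g$ is strictly increasing on $(-\infty,s^\ast)$ and strictly decreasing on $(s^\ast,\infty)$, so $s^\ast$ is the unique global maximizer; thus $\sigma_{\Phi0}=\sigma^\ast=\sqrt{g(s^\ast)}$, with the equivalent closed forms $(\sigma^\ast)^2=2f(s^\ast)/(s^\ast)^2=f'(s^\ast)/s^\ast$ coming from the two equations of the system.

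For $\alpha=1/2$ the problem is explicit: $f(s)=\ln\cosh(s/2)$, and the elementary inequality $\cosh x\le e^{x^2/2}$ gives $f(s)\le s^2/8$, i.e.\ $g(s)\le 1/4$ for all $s\neq0$, while $g(s)\to 1/4$ as $s\to 0$; hence $\sup_{s\neq 0}g(s)=1/4$ and $\sigma_{\Phi0}=1/2$. (Equivalently, the analysis above degenerates: $\phi<0$ for every $s\neq 0$, so no nonzero root exists and the supremum of $g$ equals its boundary value $g(0)=\alpha(1-\alpha)=1/4$.)

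The main obstacle I anticipate is the sign bookkeeping in the second paragraph: concluding ``exactly one nonzero zero of $\phi$'' requires correctly combining the monotonicity intervals of $\phi'$ with $\phi'(0)=0$ and its limit at $+\infty$, and likewise for $\phi$ itself, with no gaps — this is precisely the step that rules out spurious extra solutions of the system. The reduction in the first paragraph and the $\alpha=1/2$ computation are routine once the identity $\sigma_{\Phi0}(\alpha)^2=\sup_{s}2f(s)/s^2$ and the tilt identities for $f',f''$ are in hand.
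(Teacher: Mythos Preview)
Your argument is correct. The reduction $\sigma_{\Phi0}^2=\sup_{s\neq 0}2f(s)/s^2$ is valid, the tilt identities $f'=p-\alpha$, $f''=p(1-p)$, $f'''=p(1-p)(1-2p)$ are right, and your sign analysis of $\phi''$, $\phi'$, $\phi$ does yield exactly one nonzero root of $\phi$ (hence exactly one nontrivial solution of the system) together with global maximality at $s^\ast$. The symmetric case $\alpha>1/2$ and the $\alpha=1/2$ computation match the paper.

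Your route, however, differs from the paper's. The paper keeps both variables and studies $h(s,\sigma)=f(s)-\tfrac12\sigma^2s^2$ for \emph{fixed} $\sigma$: it first shows $\partial h/\partial s>0$ for $s<0$ whenever $\sigma^2\ge\alpha(1-\alpha)$, ruling out critical points on the negative axis; then for $s>0$ it rewrites $\partial h/\partial s=0$ as $(s/2)\coth(s/2)=(\tfrac12-\alpha)s+\alpha(1-\alpha)/\sigma^2$ and invokes the convexity of $(s/2)\coth(s/2)$ to get at most two intersections with the line, the larger being a local maximum of $h$; finally it tunes $\sigma$ so that this maximum equals $0$. Your approach instead eliminates $\sigma$ at the outset, reducing everything to the single function $g(s)=2f(s)/s^2$, and counts critical points via the tilt parameter $p(s)$. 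What you gain is a cleaner, fully rigorous uniqueness argument: the existence and uniqueness of $\sigma^\ast$ follow directly from the shape of $g$, whereas the paper's ``by tuning $\sigma$'' and ``which is attainable'' steps are somewhat heuristic. What the paper's approach buys is a more geometric picture (Fig.~\ref{fig:PhiNorm}) of the parabola $\tfrac12\sigma^2s^2$ becoming tangent to $f$, and a convexity lemma about $(s/2)\coth(s/2)$ that some readers may find more familiar than the tilt calculus.
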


\begin{proof}
We will consider three cases based on the value of $\alpha$.

\emph{Case I: $\alpha=0.5$.} In this case, $\sigma_{\Phi0}$ can be obtained directly by noticing
\begin{align}
\exp[f] &= \cosh\left(\frac{s}{2}\right) = \sum_{n=0}^\infty\frac{(s/2)^{2n}}{(2n)!} \leq \sum_{n=0}^\infty\frac{(s/2)^{2n}}{2^nn!}\notag\\
&= \exp\left(\frac{1}{2}\times 0.5^2\times s^2\right).\notag
\end{align}
Hence by direct inspection, $\sigma_{\Phi0} = 0.5$.

\emph{Case II: $0<\alpha<0.5$.} Before diving into the proof, we briefly address the main idea first. Given $\alpha$, the function $h$ depends on both $s$ and $\sigma$. Requiring its maximum to be no greater than 0 at some $\sigma$ naturally leads to two conditions that $h(s,\sigma)=0$ and $\partial h(s,\sigma)/\partial s=0$, which are just \eqref{eq:norm_condition}. It is expected that $\sigma_{\Phi0}$ can be obtained from the corresponding nontrivial solutions, since it is the minimal $\sigma$ that satisfies \eqref{eq:f}. Fig. \ref{fig:PhiNorm} confirms this intuition, where $\alpha=0.05$ is assumed for illustration. By tuning $\sigma$ to some $\sigma^\ast$, one can see the maximum of $h$ at some $s^\ast>0$ can be exactly equal to 0, i.e., $h(s^\ast,\sigma^\ast)=0$. Also at this $s^\ast$, $h$ is tangent to the $s$-axis, indicating that $\partial h(s,\sigma^\ast)/\partial s|_{s=s^\ast}=0$. Hence $\sigma_{\Phi0}=\sigma^\ast$.

\begin{figure}[tb]
  \centerline{\includegraphics[width=1.0\linewidth]{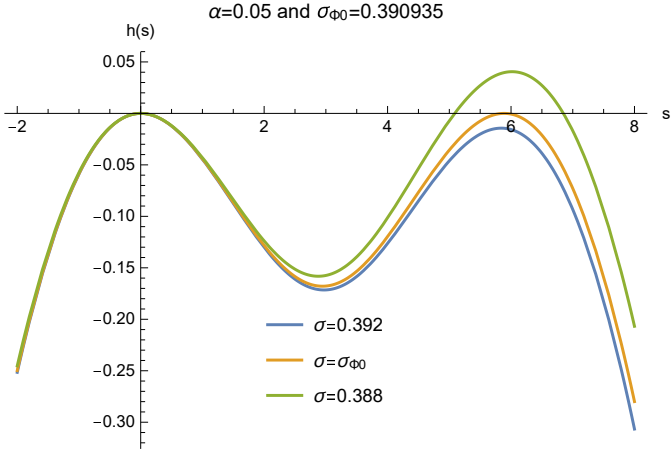}}
  \caption{Assuming $\alpha=0.05$, we show the main idea underlying Theorem \ref{thm:Phi_norm}, and numerically calculate $\sigma_{\Phi0}$, which is the minimal $\sigma$ such that $h(s)$ is no greater than 0 for all $s$ due to the sub-Gaussian property \eqref{eq:f}.} \label{fig:PhiNorm}
\end{figure}

Now we turn to the proof. It is trivial that for any $\alpha$, $h$ attains its maximal value 0 at $s=0$, no matter what $\sigma>0$ is. This does not provide much useful information of $\sigma_{\Phi0}$. To proceed, we need a nontrivial local maximum of $h(s)$ at some $s\neq 0$. Our first observation is that when $0<\alpha<0.5$, there is no local maximum achieved for $s<0$, because $\partial h/\partial s>0$ for all $s<0$. To see this, let $a \equiv e^{-|s|(1-\alpha)}$, $b\equiv e^{|s|\alpha}$, and $\delta\equiv0.5-\alpha$, then we have that
\begin{align}
\frac{\partial h}{\partial s} &= -\alpha(1-\alpha)\frac{b-a}{\alpha a + (1-\alpha)b} +\sigma^2|s|\notag\\
&= -\alpha(1-\alpha)\frac{1-e^{-|s|}}{(0.5+\delta)+(0.5-\delta)e^{-|s|}}+\sigma^2|s|\notag\\
&> -2\alpha(1-\alpha)\tanh(|s|/2) +\sigma^2|s|\notag\\
&> [\sigma^2-\alpha(1-\alpha)]\times|s|\geq 0,\notag
\end{align}
where $\alpha<0.5$ (hence $\delta>0$) is used in the first inequality, the second inequality is due to $\tanh(x)<x$ for $x>0$, and the last inequality is given by \eqref{eq:sigma_range} since we have already known $\Phi(X_1^n)$ is sub-Gaussian. This result indicates that the nontrivial maximum, if any, can only be found at some $s>0$.

For $s>0$, following similar steps, we obtain
\begin{align}
\frac{\partial h}{\partial s}=\frac{\alpha(1-\alpha)}{\frac{1}{2}\coth\left(\frac{s}{2}\right) - \left(\frac{1}{2}-\alpha\right)} - \sigma^2s,\notag
\end{align}
and the condition $\partial h/\partial s=0$ then implies
\begin{align}
g(s)\equiv\frac{s}{2}\coth\left(\frac{s}{2}\right) = \left(\frac{1}{2}-\alpha\right)s+\frac{\alpha(1-\alpha)}{\sigma^2}\equiv l(s,\sigma).\notag
\end{align}
It is straightforward to check that $g(s)\geq 1$ is a positive, monotonically increasing, and strongly convex function. Hence it can intersect the straight line $l(s,\sigma)$ at no more than two points. Note $g(0^+)=1$, and $g'(0^+)=0$. The intercept of $l(s,\sigma)$ is $\alpha(1-\alpha)/\sigma^2\in(0,1)$ by \eqref{eq:sigma_range}, and the slope is greater than 0. Hence by tuning $\sigma$, it is always possible to make $g(s)$ and $l(s,\sigma)$ intersect twice. Denote these two points as $s_1(\sigma)$ and $s_2(\sigma)$, respectively, with $h(s_1)<h(s_2)$. As shown in Fig. \ref{fig:PhiNorm}, $h(s_1)$ is the minimum between two maxima $h(s=0)$ and $h(s_2)$. Then further requiring $h(s_2(\sigma))=0$ at some $\sigma^\ast$, which is attainable since $\Phi$ is known to be sub-Gaussian, we obtain $\sigma_{\Phi0}=\sigma^\ast$, and Theorem \ref{thm:Phi_norm} for the $0<\alpha<0.5$ part is proved.

\emph{Case III: $0.5<\alpha<1$.} Note $f(s)$ or $h(s)$ is invariant under the transformations $\alpha \leftrightarrow 1-\alpha$ and $s \leftrightarrow -s$. Hence $\sigma_{\Phi0}$ is the same for $\alpha$ and $1-\alpha$.

Combining all three cases, we have proved Theorem \ref{thm:Phi_norm}.
\end{proof}

\begin{figure}[tb]
    \centerline{\includegraphics[width=1.0\linewidth]{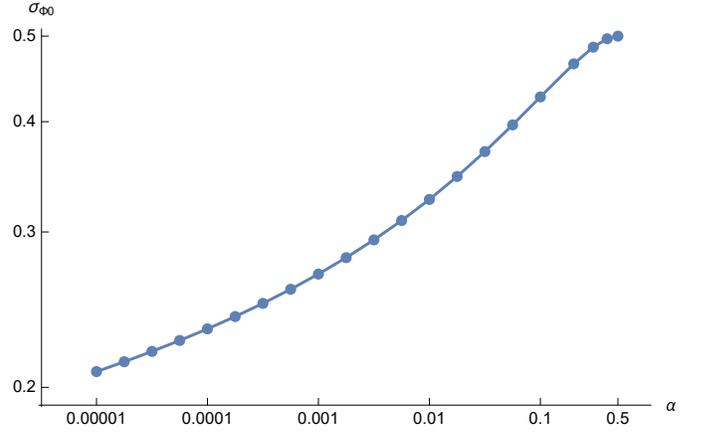}}
    \caption{The sub-Gaussian norm $\sigma_{\Phi0}$ is plotted as a function of type I error $\alpha$. Since $\sigma_{\Phi0}$ is the same for $\alpha$ and $1-\alpha$, we only plot the result for $\alpha\in(0,0.5]$.} \label{fig:alphasigma}
\end{figure}

One can calculate $\sigma_{\Phi0}$ in a principled way with given $\alpha$, without knowing $P_0$ or $P_1$ or the constant $c$ in the test function. We summarize the relation between $\alpha$ and $\sigma_{\Phi0}$ for \eqref{eq:Phi} in Fig. \ref{fig:alphasigma}. Error bounds for hypothesis testing can now be established based on $\sigma_{\Phi0}$.

\subsection{Sub-Gaussian bound for binary hypothesis testing}
\begin{lem}\label{lem:inequality}
Consider two general probability measures $\nu$ and $\mu$ on a common measurable space. Suppose $\nu\ll\mu$, and let $g\equiv d\nu/d\mu$ be the density of $\nu$ with respect to $\mu$. Let $Y$ be a sub-Gaussian random variable which is a function of $X$ that has law $\mu$ or $\nu$. Then we have
\begin{align}\label{eq:general_inequality}
|\mathbb{E}_{X\sim\nu} Y - \mathbb{E}_{X\sim\mu} Y|
\leq \sigma_{Y_{\#}\mu}\sqrt{2D_\mathrm{KL}(\nu\rVert\mu)},
\end{align}
where $\sigma_{Y_{\#}\mu}$ denotes the sub-Gaussian norm of $Y$ with respect to the push forward measure $Y_{\#}\mu$.
\end{lem}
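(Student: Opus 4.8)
The plan is to recognize \eqref{eq:general_inequality} as an instance of the Donsker--Varadhan change-of-measure bound combined with the sub-Gaussian moment control, followed by a Chernoff-type optimization over the exponential tilting parameter. If $D_\mathrm{KL}(\nu\rVert\mu)=\infty$ the inequality is vacuous, so I may assume $D_\mathrm{KL}(\nu\rVert\mu)<\infty$. Writing $Z\equiv Y-\mathbb{E}_{X\sim\mu}Y$ for the centered variable and using $\mathbb{E}_{X\sim\mu}g=1$, I first rewrite the left-hand side as
\begin{align}
\mathbb{E}_{X\sim\nu}Y-\mathbb{E}_{X\sim\mu}Y=\mathbb{E}_{X\sim\mu}[Zg]=\mathbb{E}_{X\sim\nu}Z,\notag
\end{align}
so it suffices to bound $|\mathbb{E}_{X\sim\nu}Z|$ by $\sigma\sqrt{2D_\mathrm{KL}(\nu\rVert\mu)}$ for every $\sigma$ admissible in the definition of $\sigma_{Y_{\#}\mu}$ (i.e.\ every $\sigma$ for which $\mathbb{E}_{Y_{\#}\mu}e^{s(Y-\mathbb{E}_{\mu}Y)}\leq e^{\sigma^2 s^2/2}$ for all $s$), and then to take the infimum over such $\sigma$.

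Next I would invoke the Gibbs variational (Donsker--Varadhan) inequality: for any measurable $W$ with $\mathbb{E}_{X\sim\mu}e^{W}\in(0,\infty)$,
\begin{align}
\mathbb{E}_{X\sim\nu}W\leq D_\mathrm{KL}(\nu\rVert\mu)+\ln\mathbb{E}_{X\sim\mu}e^{W}.\notag
\end{align}
This is proved by introducing the tilted measure $d\tilde\mu\propto e^{W}\,d\mu$, noting that $e^{W}>0$ makes $\tilde\mu$ equivalent to $\mu$ so that $\nu\ll\mu\ll\tilde\mu$, and expanding $D_\mathrm{KL}(\nu\rVert\mu)=D_\mathrm{KL}(\nu\rVert\tilde\mu)+\mathbb{E}_{X\sim\nu}[W]-\ln\mathbb{E}_{X\sim\mu}e^{W}\geq\mathbb{E}_{X\sim\nu}[W]-\ln\mathbb{E}_{X\sim\mu}e^{W}$, where the inequality is just $D_\mathrm{KL}(\nu\rVert\tilde\mu)\geq 0$. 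I apply this with $W=sZ$ for a fixed $s\in\mathbb{R}$; the required finiteness $\mathbb{E}_{X\sim\mu}e^{sZ}<\infty$ is precisely what sub-Gaussianity of $Y$ under $Y_{\#}\mu$ guarantees, and that same property gives $\ln\mathbb{E}_{X\sim\mu}e^{sZ}\leq\sigma^2 s^2/2$.

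Combining the two displays yields $s\,\mathbb{E}_{X\sim\nu}Z\leq D_\mathrm{KL}(\nu\rVert\mu)+\sigma^2 s^2/2$ for every $s\in\mathbb{R}$. Restricting to $s>0$, dividing by $s$, and minimizing the right-hand side over $s>0$ at $s^\ast=\sqrt{2D_\mathrm{KL}(\nu\rVert\mu)}/\sigma$ gives $\mathbb{E}_{X\sim\nu}Z\leq\sigma\sqrt{2D_\mathrm{KL}(\nu\rVert\mu)}$; repeating with $s<0$ (equivalently, applying the bound to $-Y$, which has the same sub-Gaussian norm by the Remark) gives the matching lower bound, hence $|\mathbb{E}_{X\sim\nu}Z|\leq\sigma\sqrt{2D_\mathrm{KL}(\nu\rVert\mu)}$. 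Taking the infimum over admissible $\sigma$ produces \eqref{eq:general_inequality}. The step I would be most careful about is the measure-theoretic bookkeeping rather than the algebra: checking that $\mathbb{E}_{X\sim\nu}Y$ is well defined and finite (which follows from $D_\mathrm{KL}(\nu\rVert\mu)<\infty$ together with the exponential integrability of $Z$ under $\mu$, via the two-sided Donsker--Varadhan estimate), that the chain $\nu\ll\mu\ll\tilde\mu$ genuinely holds, and that passing the infimum over $\sigma$ through the bound is legitimate (it is, since the estimate holds for each admissible $\sigma$ separately). Everything else is the standard Chernoff argument.
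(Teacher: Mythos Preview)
Your proof is correct and follows essentially the same route as the paper: both use the Donsker--Varadhan/entropy variational identity to obtain $s(\mathbb{E}_{\nu}Y-\mathbb{E}_{\mu}Y)\leq D_\mathrm{KL}(\nu\rVert\mu)+\tfrac{1}{2}\sigma^2 s^2$ from the sub-Gaussian moment bound, then optimize over $s$. The only cosmetic difference is that the paper phrases the variational step as $\mathrm{Ent}_\mu(g)=\sup\{\int\eta g\,d\mu:\int e^{\eta}d\mu\leq 1\}$ and plugs in $\eta=sZ-\tfrac{1}{2}\sigma_{Y_\#\mu}^2s^2$ directly, whereas you state the equivalent Donsker--Varadhan form and are a bit more explicit about integrability and about passing to the infimum over admissible $\sigma$.
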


Recently, there have been several works with findings similar to Theorem \ref{thm:bound}, in the context of nonequilibrium statistical physics \cite{me}, data exploration or model bias analysis \cite{IEEE:Zou,IEEE:Wang}, or uncertainty quantification for stochastic processes \cite{SIAM:Rey-Bellet}. They can, however, be analyzed in a unified way based on the spirit in \cite{JFA:Bobkov}.

\begin{proof}
We have assumed $\nu\ll\mu$ and $g\equiv d\nu/d\mu$. The associated entropy functional of $g$ with respect to $\mu$ is defined as $\mathrm{Ent}_{\mu}(g)=\int g\ln g d\mu$. It is straightforward to find that
\begin{align}\label{eq:KL_Ent}
\mathrm{Ent}_{\mu}(g)
&= \int \frac{d\nu}{d\mu}\ln\left(\frac{d\nu}{d\mu}\right) d\mu 
= \int \ln\left(\frac{d\nu}{d\mu}\right) d\nu\notag\\
&= D_\mathrm{KL}(\nu\rVert\mu).
\end{align}
On the other hand, by the variational representation of $\mathrm{Ent}_{\mu}(g)$, we have that
\begin{align}\label{eq:Ent_Variational}
\mathrm{Ent}_{\mu}(g) = \sup_\eta \int \eta gd\mu,\ \text{with}\ \int e^\eta d\mu\leq 1,
\end{align}
where $\eta$ is a measurable function. We have $\int \eta gd\mu = \mathbb{E}_\mu \eta g = \mathbb{E}_\nu \eta$ and $\int e^\eta d\mu = \mathbb{E}_\mu e^\eta$.

By assumption, $Y(X)$ is sub-Gaussian. If $X\sim\mu$, then the sub-Gaussian norm of $Y$ under the push forward measure $Y_\#\mu$ is $\sigma_{Y_{\#}\mu}$. Let us construct $\eta$ as
\begin{align}
\eta=s[Y(X) -\mathbb{E}_{X\sim\mu} Y(X)]-\frac{1}{2}\sigma_{Y_{\#}\mu}^2s^2.\notag
\end{align}
It is clear that $\mathbb{E}_\mu e^\eta\leq 1$ can be satisfied. Combining $\eta$ with \eqref{eq:KL_Ent} and \eqref{eq:Ent_Variational}, we arrive at
\begin{align}
D_\mathrm{KL}(\nu\rVert\mu)
&\geq  \mathbb{E}_{X\sim\mu} \eta(Y(X))g\notag\\
&=  \mathbb{E}_{X\sim\mu} g\left[s(Y-\mathbb{E}_{X\sim\mu} Y)-\frac{1}{2}\sigma_{Y_{\#}\mu}^2s^2\right]\notag\\
&= \mathbb{E}_{X\sim\nu} \left[s(Y-\mathbb{E}_{X\sim\mu} Y)-\frac{1}{2}\sigma_{Y_{\#}\mu}^2s^2\right]\notag\\
&= s(\mathbb{E}_{X\sim\nu} Y - \mathbb{E}_{X\sim\mu} Y) -\frac{1}{2}\sigma_{Y_{\#}\mu}^2s^2,\notag
\end{align}
which holds for any $s\in\mathbb{R}$. Hence, Lemma \ref{lem:inequality} is proved:
\begin{align}
|\mathbb{E}_{X\sim\nu} Y - \mathbb{E}_{X\sim\mu} Y|
&\leq \left [\inf_{|s|} \frac{D_\mathrm{KL}(\nu\rVert\mu)}{|s|} + \frac{1}{2}\sigma_{Y_{\#}\mu}^2|s|\right]\notag\\
&= \sigma_{Y_{\#}\mu}\sqrt{2D_\mathrm{KL}(\nu\rVert\mu)}.\notag
\end{align}
\end{proof}

\begin{thm}\label{thm:bound}
Suppose $P_1\ll P_0$, and denote the sub-Gaussian norm of test function \eqref{eq:Phi} under the null hypothesis $H_0$ as $\sigma_{\Phi0}$. Then we have
\begin{align}\label{eq:our_bound}
|\mathbb{E}_{X_1^n\sim P_0^{\otimes n}}\Phi(X_1^n) -
& \mathbb{E}_{X_1^n\sim P_1^{\otimes n}} \Phi(X_1^n)| \notag\\
&\leq \sigma_{\Phi0}\sqrt{2nD_\mathrm{KL}(P_1\rVert P_0)}.
\end{align}
\end{thm}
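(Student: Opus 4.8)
The plan is to apply Lemma~\ref{lem:inequality} directly with a suitable identification of the abstract objects $\mu$, $\nu$, and $Y$. Concretely, I would set $\mu = P_0^{\otimes n}$ and $\nu = P_1^{\otimes n}$ as probability measures on the product space of $X_1^n$, and take $Y = \Phi(X_1^n)$, the test function from \eqref{eq:Phi}, viewed as a function of $X_1^n$. The hypothesis $P_1 \ll P_0$ immediately gives $P_1^{\otimes n} \ll P_0^{\otimes n}$ on the product space, so Lemma~\ref{lem:inequality} applies. Since $\Phi$ is an indicator function it is bounded, hence sub-Gaussian by Lemma~\ref{lem:boundedsubG}, and its sub-Gaussian norm under the push-forward measure $\Phi_\# P_0^{\otimes n}$ is exactly $\sigma_{\Phi0}$ by the definition given just before Theorem~\ref{thm:Phi_norm}. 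Plugging these into \eqref{eq:general_inequality} yields
\begin{align}
|\mathbb{E}_{X_1^n\sim P_1^{\otimes n}}\Phi - \mathbb{E}_{X_1^n\sim P_0^{\otimes n}}\Phi| \leq \sigma_{\Phi0}\sqrt{2 D_\mathrm{KL}(P_1^{\otimes n}\rVert P_0^{\otimes n})}.\notag
\end{align}

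The remaining step is to simplify the divergence term using the tensorization (additivity) property of KL divergence for product measures: $D_\mathrm{KL}(P_1^{\otimes n}\rVert P_0^{\otimes n}) = n\, D_\mathrm{KL}(P_1\rVert P_0)$. This is a standard fact that follows from the chain rule for relative entropy applied to independent coordinates, or directly by writing the log-likelihood ratio of the product densities as a sum and taking expectations. Substituting this into the bound above gives exactly \eqref{eq:our_bound}. The absolute value is symmetric, so the ordering of the two expectations inside does not matter.

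I do not expect any serious obstacle here; the theorem is essentially a corollary of Lemma~\ref{lem:inequality} once the tensorization identity is invoked. The only points requiring a line of care are: (i) confirming that the sub-Gaussian norm appearing in Lemma~\ref{lem:inequality} — which is the norm of $Y$ under $Y_\#\mu$, i.e.\ under the null — is indeed the quantity $\sigma_{\Phi0}$ defined earlier (it is, by construction), rather than the norm under $H_1$; and (ii) making sure the product-space absolute continuity and the tensorization of KL both genuinely hold, which they do under the stated assumption $P_1\ll P_0$. One could alternatively note that if $P_1\not\ll P_0$ then $D_\mathrm{KL}(P_1\rVert P_0)=\infty$ and the bound holds trivially, but since the theorem already assumes absolute continuity this case need not be discussed.
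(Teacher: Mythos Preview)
Your proposal is correct and follows exactly the same route as the paper's proof: identify $\mu=P_0^{\otimes n}$, $\nu=P_1^{\otimes n}$, $Y=\Phi(X_1^n)$ in Lemma~\ref{lem:inequality}, then use the tensorization $D_\mathrm{KL}(P_1^{\otimes n}\rVert P_0^{\otimes n})=nD_\mathrm{KL}(P_1\rVert P_0)$. Your additional remarks on why $\sigma_{Y_\#\mu}=\sigma_{\Phi0}$ and on absolute continuity are apt clarifications but not needed beyond what the paper itself states.
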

\begin{proof}
Let $X=X_1^n$, $\nu=P_1^{\otimes n}$ and $\mu=P_0^{\otimes n}$, and due to the i.i.d. setting, $D_\mathrm{KL}(P_1^{\otimes n}\rVert P_0^{\otimes n}) = nD_\mathrm{KL}(P_1\rVert P_0)$. Then the proof is completed by letting $Y=\Phi(X_1^n)$ in Lemma \ref{lem:inequality}.
\end{proof}

\begin{cor}\label{cor:error_sum}
One has
\begin{align}\label{eq:error_sum}
\alpha + \beta \geq
1 - \sigma_{\Phi0}\sqrt{2nD_\mathrm{KL}(P_1\rVert P_0)}.
\end{align}
\end{cor}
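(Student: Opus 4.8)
The plan is to read \eqref{eq:error_sum} off as an almost immediate consequence of Theorem~\ref{thm:bound}, after translating the left-hand side of \eqref{eq:our_bound} into the error rates $\alpha$ and $\beta$. First I would recall the definitions \eqref{eq:alpha} and \eqref{eq:beta}: by construction $\alpha=\mathbb{E}_{X_1^n\sim P_0^{\otimes n}}\Phi(X_1^n)$ and $\beta=1-\mathbb{E}_{X_1^n\sim P_1^{\otimes n}}\Phi(X_1^n)$, so that $\mathbb{E}_{X_1^n\sim P_1^{\otimes n}}\Phi(X_1^n)=1-\beta$. Substituting both expressions into the difference inside the absolute value in \eqref{eq:our_bound} gives
\[
\mathbb{E}_{X_1^n\sim P_0^{\otimes n}}\Phi(X_1^n) - \mathbb{E}_{X_1^n\sim P_1^{\otimes n}}\Phi(X_1^n) = \alpha-(1-\beta) = \alpha+\beta-1.
\]

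Next I would invoke Theorem~\ref{thm:bound} directly: since $P_1\ll P_0$ is assumed, \eqref{eq:our_bound} yields $|\alpha+\beta-1|\le \sigma_{\Phi0}\sqrt{2nD_\mathrm{KL}(P_1\rVert P_0)}$. Then I would use the elementary fact $1-(\alpha+\beta)\le|\alpha+\beta-1|$ to conclude $\alpha+\beta\ge 1-\sigma_{\Phi0}\sqrt{2nD_\mathrm{KL}(P_1\rVert P_0)}$, which is exactly \eqref{eq:error_sum}. This completes the argument.

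Because the derivation is a one-line manipulation, there is no real technical obstacle; the only points worth flagging are bookkeeping ones. First, $\sigma_{\Phi0}$ must be the sub-Gaussian norm of $\Phi$ taken under $H_0$, i.e.\ with respect to the push-forward measure $\Phi_\#P_0^{\otimes n}$, matching the hypothesis of Theorem~\ref{thm:bound} and computable via Theorem~\ref{thm:Phi_norm}. Second, the bound is only informative when its right-hand side is positive, i.e.\ $\sigma_{\Phi0}\sqrt{2nD_\mathrm{KL}(P_1\rVert P_0)}<1$, which parallels the caveat attached to Pinsker's bound \eqref{eq:Pinsker_result}. One may also remark that, since $\sigma_{\Phi0}\le 0.5$ by \eqref{eq:Phi_norm_0.5}, \eqref{eq:error_sum} is never weaker than \eqref{eq:Pinsker_result}, and is strictly sharper whenever $\alpha\neq 0.5$.
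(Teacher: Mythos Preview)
Your proposal is correct and follows exactly the paper's own approach: substitute the definitions \eqref{eq:alpha} and \eqref{eq:beta} into the left-hand side of \eqref{eq:our_bound} and simplify. The additional remarks you include about $\sigma_{\Phi0}$, informativeness of the bound, and the comparison with Pinsker's bound are accurate and consistent with the discussion in the surrounding text.
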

\begin{proof}
Insert definitions \eqref{eq:alpha} and \eqref{eq:beta} into \eqref{eq:our_bound} and then simplify to obtain the result.
\end{proof}

\begin{remark}
Corollary \ref{cor:error_sum} can be relaxed by replacing the sub-Gaussian norm $\sigma_{\Phi0}$ with one of its upper bounds. In fact, if we use the universal upper bound provided by \eqref{eq:Phi_norm_0.5}, then Corollary \ref{cor:error_sum} reduces to Pinsker's classical result \eqref{eq:Pinsker_result}. However, our bound is always stronger in general. In particular, when controlling $\alpha$ is more important than controlling $\beta$, one might set $c>0$ to put more emphasis on it. Hence for the same sample size $n$, the larger $c$ is, the smaller $\alpha$ and $\sigma_{\Phi0}$ are, resulting in a tighter bound for $\beta$.
\end{remark}

\begin{remark}
There is another inequality from Theorem \ref{thm:bound} that $\alpha +\beta\leq 1+\sigma_{\Phi0}\sqrt{2nD_\mathrm{KL}(P_1\rVert P_0)}$. But it is somewhat trivial because the bound is greater than 1 and in general does not provide much useful information. For example, one can always accept $H_0$, and for this trivial decision rule, $\alpha=0$, but $\beta\leq 1$ by definition. Hence $\alpha+\beta\leq 1$, and the extra term $\sigma_{\Phi0}\sqrt{2nD_\mathrm{KL}(P_1\rVert P_0)}$ is not informative at all.  
\end{remark}

\begin{remark}
Suppose also $P_0\ll P_1$, which is the usual case in hypothesis testing. Then by symmetry, it is straightforward to have
\begin{align}\label{eq:alpha_bound}
\alpha + \beta \geq
1 - \sigma_{\Phi1}\sqrt{2nD_\mathrm{KL}(P_0\rVert P_1)},
\end{align}
where $\sigma_{\Phi1}$ is the sub-Gaussian norm of $\Phi(X_1^n)$ under $H_1$, and it is a function of $\beta$. This result is nontrivially different than \eqref{eq:error_sum}, not only because different norms are applied, but also because the KL divergence is not symmetric in two involved probabilities. Given \eqref{eq:alpha_bound}, we can either bound $\alpha$ when $\beta$ is given or bound $\beta$ in an implicit way when $\alpha$ is given.
\end{remark}

\begin{remark}
Similar to \eqref{eq:Pinsker_result}, our bound is also nonasymptotic in nature as it holds for any finite $n$. The expense we pay for this, however, is that in the large $n$ and small $\alpha$ limit, our bound for $\beta$ is not as tight as Stein's lemma which states that $\beta\sim e^{-nD_\mathrm{KL}(P_0\rVert P_1)}$ \cite{Book:Cover}.
\end{remark}

\subsection{Sub-Gaussian bound for $M$-ary hypothesis testing}

A generalization of our result to the $M$-ary hypothesis testing can be obtained. Suppose there are $M$ hypotheses, represented by the corresponding probability distributions $\{P_1,\ldots,P_M\}$. Suppose from one of such distributions $P_{i_0}$, $n$ data points $X_1^n$ are drawn independently. Our task is to infer the hypothesis index ${i_0}$ from data. Similar to \eqref{eq:Phi}, let us consider the test function for the $i$-th hypothesis as
\begin{align}\label{eq:varphi}
\varphi_i(X_1^n) = \prod_{j\neq i}I\{D_\mathrm{KL}(\hat P_n\rVert P_j)-D_\mathrm{KL}(\hat P_n\rVert P_i)>c_i\},\notag
\end{align}
with $\varphi = 1-\Phi$ in the binary case. We will consider the case that $c_i=0$ for all $i\in\{1,\ldots,M\}$. Unlike in the binary case where $c>0$ can be adopted to intentionally render a small $\alpha$, the test function here is purely likelihood-based without any prescribed preference over any particular hypothesis. It is known that this approach minimizes $\alpha+\beta$ in the binary case (the Bayes classifier). From $M$ such test functions $\varphi_i$, one can construct a random vector ${\boldsymbol \varphi}=(\varphi_1,\ldots,\varphi_M)$. Assume there always exists a single index $i_0$ such that
\begin{align}
D_\mathrm{KL}(\hat P_n\rVert P_{j})-D_\mathrm{KL}(\hat P_n\rVert P_{i_0}) > 0 \notag
\end{align}
holds for all $j\neq i_0$. In this case, $\varphi_{i_0}=1$, and $\varphi_{j}=0$ for $j\neq i_0$. Since $X_1^n$ is random, we expect that $i_0$ may differ for each realization. However, it is almost surely with respect to all $P_i$'s that
\begin{equation}
\sum_{i=1}^M \varphi_i(X_1^n) = 1. \label{eq:M_ary_sum}
\end{equation}

Under $M$ hypotheses, we can construct a matrix, denoted $\mathbb{E}\boldsymbol\varphi$, that encodes the error incurred in testing:
\begin{align}\label{eq:M_ary_Matrix}
\boldsymbol{\mathbb{E}\varphi} \equiv \left(
                                      \begin{array}{ccc}
                                        \mathbb{E}_1\varphi_1 & \cdots &  \mathbb{E}_1\varphi_M \\
                                        \vdots & \ddots & \vdots \\
                                        \mathbb{E}_M\varphi_1 & \cdots & \mathbb{E}_M\varphi_M \\
                                      \end{array}
                                    \right),
\end{align}
where the matrix element $\mathbb{E}_i\varphi_j\equiv\mathbb{E}_{X_1^n\sim P_i^{\otimes n}}\varphi_j$. By \eqref{eq:M_ary_sum}, the row sum of $\mathbb{E}\boldsymbol\varphi$ is 1. The diagonal elements of $\mathbb{E}\boldsymbol\varphi$ are actually the probabilities that the underlying hypothesis is correctly identified. In other words, the probability of making an incorrect decision when the data are generated from the $i$th hypothesis is $\alpha_i \equiv 1-\mathbb{E}_i\varphi_i$. We denote $\alpha_{\max}\equiv\max_{i}\alpha_i$. The following theorem provides a lower bound to $\alpha_{\max}$ that is complementary to Fano's inequality.

\begin{thm}\label{thm:beyond_fano}
Suppose $P_i\ll P_j$ for all $i,j\in\{1,\ldots, M\}$. For any $j\in\{1,\ldots,M\}$, we have
\begin{align}\label{eq:beyond_fano}
\alpha_{\max} \geq 1 - \frac{1}{M} - \frac{1}{M}\sum_{i=1}^M \sigma_{\varphi_i}\sqrt{2nD_\mathrm{KL}(P_j\rVert P_i)},
\end{align}
where $\sigma_{\varphi_i}$ is the sub-Gaussian norm of $\varphi_i$ with respect to the $i$th hypothesis.
\end{thm}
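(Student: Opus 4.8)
The plan is to mimic the binary argument of Theorem~\ref{thm:bound}, but now to sum the Lemma~\ref{lem:inequality} estimate over all $M$ test functions and exploit the normalization \eqref{eq:M_ary_sum}. First I would observe that each $\varphi_i(X_1^n)$ is a product of indicator functions, hence takes values in $\{0,1\}$ and is bounded; by Lemma~\ref{lem:boundedsubG} it is therefore sub-Gaussian under every $P_i^{\otimes n}$, so its sub-Gaussian norm $\sigma_{\varphi_i}$ with respect to the $i$th hypothesis is well defined and finite (indeed $\sigma_{\varphi_i}\le 1/2$).

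Fix $j\in\{1,\ldots,M\}$. For each $i$, I would apply Lemma~\ref{lem:inequality} with $X=X_1^n$, $Y=\varphi_i$, $\mu=P_i^{\otimes n}$, and $\nu=P_j^{\otimes n}$; the assumption $P_j\ll P_i$ is exactly what is needed, and the i.i.d.\ tensorization gives $D_\mathrm{KL}(P_j^{\otimes n}\rVert P_i^{\otimes n})=nD_\mathrm{KL}(P_j\rVert P_i)$ just as in Theorem~\ref{thm:bound}. This yields, after dropping the (irrelevant) other side of the absolute value,
\begin{align}
\mathbb{E}_i\varphi_i \le \mathbb{E}_j\varphi_i + \sigma_{\varphi_i}\sqrt{2nD_\mathrm{KL}(P_j\rVert P_i)}.\notag
\end{align}
Summing over $i=1,\ldots,M$ and using \eqref{eq:M_ary_sum} to collapse $\sum_{i=1}^M\mathbb{E}_j\varphi_i=\mathbb{E}_j\big(\sum_{i=1}^M\varphi_i\big)=1$, I obtain
\begin{align}
\sum_{i=1}^M \mathbb{E}_i\varphi_i \le 1 + \sum_{i=1}^M \sigma_{\varphi_i}\sqrt{2nD_\mathrm{KL}(P_j\rVert P_i)}.\notag
\end{align}

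Finally I would rewrite the left-hand side via $\mathbb{E}_i\varphi_i = 1-\alpha_i$, so that $\sum_{i=1}^M(1-\alpha_i)=M-\sum_{i=1}^M\alpha_i$, rearrange to
\begin{align}
\frac{1}{M}\sum_{i=1}^M \alpha_i \ge 1 - \frac{1}{M} - \frac{1}{M}\sum_{i=1}^M \sigma_{\varphi_i}\sqrt{2nD_\mathrm{KL}(P_j\rVert P_i)},\notag
\end{align}
and conclude by the trivial bound $\alpha_{\max}=\max_i\alpha_i\ge\frac{1}{M}\sum_{i=1}^M\alpha_i$. Since $j$ was arbitrary, the inequality holds for every $j$, and one may retain the sharpest choice.

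I do not expect a serious obstacle here; the two points that need a little care are (i) applying Lemma~\ref{lem:inequality} in the correct direction, with $\mu=P_i^{\otimes n}$ rather than $P_j^{\otimes n}$, so that the norm appearing is $\sigma_{\varphi_i}$ under the $i$th hypothesis as stated, and (ii) justifying that only the one-sided inequality $\mathbb{E}_i\varphi_i-\mathbb{E}_j\varphi_i\le\sigma_{\varphi_i}\sqrt{2nD_\mathrm{KL}(P_j\rVert P_i)}$ is ever used, which is immediate from \eqref{eq:general_inequality}. Everything else is the bookkeeping of summing the $M$ estimates and invoking the almost-sure identity \eqref{eq:M_ary_sum}.
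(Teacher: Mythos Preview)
Your proposal is correct and follows essentially the same route as the paper: apply the sub-Gaussian bound (Lemma~\ref{lem:inequality}/Theorem~\ref{thm:bound}) with $\mu=P_i^{\otimes n}$, $\nu=P_j^{\otimes n}$ to obtain $\mathbb{E}_i\varphi_i\le\mathbb{E}_j\varphi_i+\sigma_{\varphi_i}\sqrt{2nD_\mathrm{KL}(P_j\rVert P_i)}$, sum over $i$, and collapse $\sum_i\mathbb{E}_j\varphi_i=1$ via \eqref{eq:M_ary_sum}. The only cosmetic difference is that the paper writes $M(1-\alpha_{\max})\le\sum_i\mathbb{E}_i\varphi_i$ directly, whereas you pass through the average $\frac{1}{M}\sum_i\alpha_i$ before invoking $\alpha_{\max}\ge\frac{1}{M}\sum_i\alpha_i$; the two are equivalent.
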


\begin{proof}
First note that $\varphi_i$ is sub-Gaussian since it takes on values in $\{0,1\}$. If $\alpha_i$ is fixed, then the sub-Gaussian norm $\sigma_{\varphi_i}$ can be calculated similarly as in the binary case. Even $\alpha_i$ is unknown, by \eqref{eq:our_bound}, we can formally have 
\begin{align}\label{eq:Ei_varphi}
\mathbb{E}_i\varphi_i \leq \sigma_{\varphi_i}\sqrt{2nD_\mathrm{KL}(P_j\rVert P_i)} + \mathbb{E}_j\varphi_i.
\end{align}
Summing over $i$ and combining \eqref{eq:M_ary_sum}, we find
\begin{align}
M(1-\alpha_{\max}) \leq \sum_{i=1}^M \mathbb{E}_i\varphi_i \leq 1 + \sum_{i=1}^M \sigma_{\varphi_i}\sqrt{2nD_\mathrm{KL}(P_j\rVert P_i)}.\notag
\end{align}
Finally, we arrive at \eqref{eq:beyond_fano} by rearranging the terms. Hence the proof is completed.
\end{proof}

If we aim at lower bounding $\alpha_{\max}$, then using the sub-Gaussian norm $\sigma_{\varphi_i}$ in Theorem \ref{thm:beyond_fano} seems not useful practically, since $\sigma_{\varphi_i}$ itself depends on $\alpha_i$. Nonetheless, due to the universal upper bound \eqref{eq:Phi_norm_0.5}, we can have a relaxed version of \eqref{eq:beyond_fano} as in the corollary below.
\begin{cor}
For any $j\in\{1,\ldots, M\}$, we have
\begin{align}
\alpha_{\max} \geq 1 - \frac{1}{M} - \frac{1}{M}\sum_{i=1}^M \sqrt{\frac{n}{2}D_\mathrm{KL}(P_j\rVert P_i)},
\end{align}
or, using the mean square root of KL divergences, we have
\begin{align}
\alpha_{\max} \geq 1 - \frac{1}{M} - \frac{1}{M^2}\sum_{i,j=1}^M \sqrt{\frac{n}{2}D_\mathrm{KL}(P_j\rVert P_i)}.
\end{align}
Furthermore, if $D_\mathrm{KL}(P_i\rVert P_j)\leq \delta$ holds for each pair of $i$ and $j$, then
\begin{align}\label{eq:alpha_max}
\alpha_{\max} \geq 1 - \frac{1}{M} - \sqrt{\frac{n}{2}\delta}.
\end{align}
\end{cor}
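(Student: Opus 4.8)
The plan is to derive all three inequalities as successive relaxations of Theorem~\ref{thm:beyond_fano}. First I would invoke the universal bound $\sigma_{\varphi_i}\leq 0.5$ from \eqref{eq:Phi_norm_0.5}, which applies because each $\varphi_i$ takes values in $\{0,1\}$ and is therefore sub-Gaussian by Lemma~\ref{lem:boundedsubG}. Substituting this into \eqref{eq:beyond_fano} term by term gives $\sigma_{\varphi_i}\sqrt{2nD_\mathrm{KL}(P_j\rVert P_i)}\leq \tfrac12\sqrt{2nD_\mathrm{KL}(P_j\rVert P_i)}=\sqrt{\tfrac{n}{2}D_\mathrm{KL}(P_j\rVert P_i)}$, and summing over $i$ immediately yields the first displayed inequality, valid for every $j$.

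For the second inequality I would use that the first one holds uniformly in $j\in\{1,\ldots,M\}$. Averaging both sides over $j=1,\ldots,M$ leaves the $j$-independent quantities $\alpha_{\max}$ and $1-\tfrac1M$ untouched while turning the single sum over $i$ into the normalized double sum $\tfrac1{M^2}\sum_{i,j=1}^M\sqrt{\tfrac{n}{2}D_\mathrm{KL}(P_j\rVert P_i)}$; this is the claimed ``mean square root of KL divergences'' form.

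For the final inequality I would return to the first displayed bound and impose the hypothesis $D_\mathrm{KL}(P_i\rVert P_j)\leq\delta$ for every pair, which in particular gives $D_\mathrm{KL}(P_j\rVert P_i)\leq\delta$. Then each summand is at most $\sqrt{\tfrac{n}{2}\delta}$, so $\tfrac1M\sum_{i=1}^M\sqrt{\tfrac{n}{2}D_\mathrm{KL}(P_j\rVert P_i)}\leq\sqrt{\tfrac{n}{2}\delta}$, and the $\tfrac1M$ factor cancels against the $M$ equal terms, producing \eqref{eq:alpha_max}.

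Since every step is a one-line monotonicity or averaging argument built on results already proved, there is no real obstacle; the only points that need care are (i) checking that the universal norm bound $0.5$ is legitimately applicable to each $\varphi_i$ (it is, since $\varphi_i\in\{0,1\}$), and (ii) noting that averaging an inequality that holds for \emph{all} $j$ over $j$ preserves it, so no choice of an optimal $j$ is required for the second form.
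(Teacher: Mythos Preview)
Your proposal is correct and follows exactly the route the paper intends: relax Theorem~\ref{thm:beyond_fano} via the universal bound $\sigma_{\varphi_i}\le 0.5$ from \eqref{eq:Phi_norm_0.5}, then average over $j$ for the second form and apply the uniform KL bound $\delta$ for the third. The paper does not spell out these steps, but the text immediately preceding the corollary makes clear that this is precisely the derivation it has in mind.
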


\begin{remark}
It is interesting to compare \eqref{eq:alpha_max} with Fano's inequality \cite{Book:Fano}, which, under the same assumption that all KL divergences are uniformly bounded by $\delta$, states that
\begin{align}\label{eq:Fano}
\alpha_{\max}^{\mathrm{Fano}} \geq 1 - \frac{n\delta+\ln 2}{\ln(M-1)}.
\end{align}
As evidenced by the scalings of $M$ and $n$ in \eqref{eq:alpha_max} and \eqref{eq:Fano}, respectively, there is a region that our result outperforms Fano's in the sense that it provides a greater lower bound for $\alpha_{\max}$. Qualitatively, this happens when at least one of the number of hypotheses $M$ and the sample size $n$ is not large. For example, when $M=3$, Fano's inequality is trivial, while our result can still work nontrivially. 

\end{remark}

\section{Conclusion and discussion}
In this work, by using the sub-Gaussian property of test functions, we uncover two universal error bounds in terms of the sub-Gaussian norm and the Kullback-Leibler divergence. In the case of binary hypothesis testing, our bound \eqref{eq:error_sum} is always tighter than Pinsker's bound \eqref{eq:Pinsker_result} for any given $\alpha\neq 0.5$. In the case of $M$-ary hypothesis testing, our result \eqref{eq:beyond_fano} is complementary to Fano's inequality \eqref{eq:Fano} by providing a more informative bound when the number of hypotheses or the sample size is not large.

Given the universality of our results, we hope, with possible generalizations, they can find potential applications in fields ranging from clinical trials to quantum state discrimination. In particular, the quantum extension of these bounds is of special interest. Due to the experimental cost, it may be important to quantify statistical errors in the presence of a \textit{limited} number of observations, and nonasymptotic rather than asymptotic results are thus more relevant. Both our bounds hold for any finite sample size, and can hopefully be helpful in such cases.

\section*{Acknowledgment}
YW gratefully thank Prof. Dan Nettleton for helpful discussions that stimulated this work and Prof. Huaiqing Wu for a careful review of the manuscript and insightful comments.


\begin{thebibliography}{00}
\bibitem{Book:Keener}
R. W. Keener, \emph{Theoretical Statistics: Topics for a Core Course}. New York, NY, USA: Springer, 2010.

\bibitem{Book:Hayashi}
M. Hayashi, \emph{Quantum Information Theory: A Mathematical Foundation}, 2nd ed. New York, NY, USA: Springer-Verlag, 2017.

\bibitem{paper:NP-Lemma}
J. Neyman and E. S. Pearson, IX, ``On the problem of the most efficient tests of statistical hypotheses,'' Phil. Trans. R. Soc. London, vol. A231, pp. 289--337, April 1933.

\bibitem{Book:Cover}
T. M. Cover and J. A. Thomas, \emph{Elements of Information Theory}. Hoboken, NJ, USA: John Wiley \& Sons, 1991.

\bibitem{PRL:Seifert2005}
U. Seifert, ``Entropy production along a stochastic trajectory and an integral fluctuation theorem,'' Phys. Rev. Lett., vol. 95, no.4, July 2005, Art. no. 040602.

\bibitem{me}
Y. Wang, ``Sub-Gaussian and subexponential fluctuation-response inequalities,'' Phys. Rev. E, vol. 102, no. 5, November 2020, Art. no. 052105.

\bibitem{Book:Martin}
M. J. Wainwright, \emph{High-Dimensional Statistics: A Non-Asymptotic Viewpoint}. Cambridge, U.K.: Cambridge University Press, 2019.

\bibitem{Book:Roman}
R. Vershynin, \emph{High-Dimensional Probability: An Introduction with Applications in Data Science}. Cambridge, U.K.: Cambridge University Press, 2018.

\bibitem{Book:Fano}
R. Fano, \emph{Transmission of Information: A Statistical Theory of Communications}. Cambridge, MA, USA: M.I.T. Press, 1961.

\bibitem{IEEE:Zou}
D. Russo and J. Zou, ``How much does your data exploration overfit?
Controlling bias via information usage,'' IEEE Trans. Inf. Theory, vol. 66, no. 1, pp. 302--323, January 2020.

\bibitem{IEEE:Wang}
K. Gourgoulias, M. A. Katsoulakis, L. Rey-Bellet, and J. Wang, ``How biased is your model? Concentration inequalities, information and model bias," IEEE Trans. Inf. Theory, vol. 66, no. 5, pp. 3079--3097, May 2020.

\bibitem{SIAM:Rey-Bellet}
J. Birrell and L. Rey-Bellet, ``Uncertainty quantification for Markov processes via variational principles and functional inequalities," SIAM/ASA J. Uncertain. Quantif. vol. 8, no. 2, pp. 539--572, April 2020.

\bibitem{JFA:Bobkov}
S. G. Bobkov and F. G\"{o}tze, ``Exponential integrability and transportation cost related to logarithmic Sobolev inequalities," J. Funct. Anal., vol. 163, no. 1, pp. 1--28, April 1999.

\end{thebibliography}
\end{document}